\newtheorem{theorem}{Theorem}
\newtheorem{definition}{Definition}
\newtheorem{corollary}{Corollary}
\newtheorem{assumption}{Assumption}
\newtheorem{result}{Result}
\definecolor{Gray}{gray}{0.9}
\newcommand{\eq}[1]{Eq.~\eqref{#1}}
\newcommand{\bgp}{\texttt{bgp-degree }}
\newcommand{\Dix}{D(i|x)}
\begin{document}

\title{Analysing the Effects of Routing Centralization on BGP Convergence Time}
\author{Pavlos Sermpezis\\FORTH, Greece\\sermpezis@ics.forth.gr \and
		Xenofontas Dimitropoulos\\FORTH, Greece\\fontas@ics.forth.gr }


\maketitle


\section{Introduction}\label{sec:intro}
Software-defined networking (SDN) has improved the routing functionality in networks like data centers or WANs. Recently, several studies proposed to apply the SDN principles in the Internet's inter-domain routing as well~\cite{Kotronis-CXP-SOSR-2016,Gupta-SDX-CCR-2014,Kotronis-Routing-Centralization-ComNets-2015,Rothenberg-Revisiting-RCP-HotSDN-2012,Caesar-RCP-NSDI-2005}. This could offer new routing opportunities~\cite{Kotronis-CXP-SOSR-2016,Gupta-SDX-CCR-2014} and improve the performance of BGP~\cite{Kotronis-Routing-Centralization-ComNets-2015,Rothenberg-Revisiting-RCP-HotSDN-2012,Caesar-RCP-NSDI-2005}, which can take minutes to converge to routing changes~\cite{Labozitz-Delayed-convergence-CCR-2000,Kushman-Can-Hear-CCR-2007,Oliveira-Quantifying-Path-Exploration-ToN-2009}.

Previous works have demonstrated that centralization can benefit the functionality of BGP, and improve its slow convergence that causes severe packet losses~\cite{Labozitz-Delayed-convergence-CCR-2000} and performance degradation~\cite{Kushman-Can-Hear-CCR-2007}. However, due to (a) the fact that previous works mainly focus on system design aspects, and (b) the lack of real deployments, it is not clearly understood yet \textit{to what extent} inter-domain SDN can improve performance. 

To this end, in this work, we make the first effort towards \textit{analytically} studying the effects of routing centralization on the performance of inter-domain routing, and, in particular, the convergence time of BGP. Specifically, we propose a Markovian model for inter-domain networks, where a subset of nodes (\textit{domains}) coordinate to centralize their inter-domain routing. We then derive analytic results that quantify the BGP convergence time under various network settings (like, SDN penetration, topology, BGP configuration, etc.). Our analysis and results facilitate the performance evaluation of inter-domain SDN networks, which have been studied (till now) only through simulations/emulations that are known to suffer from high time/resource requirements and limited scalability.




\section{Model}\label{sec:model}
\textbf{Network Model.} We consider a network (e.g., the Internet) composed of $N$ \textit{domains} or \textit{autonomous systems} (ASes). We represent each AS as a single node, i.e., a single BGP router (similarly to~\cite{Kotronis-Routing-Centralization-ComNets-2015}). Such an abstraction allows to hide the details of the internal structure of ASes, and focus on inter-domain routing.


We assume that $k\in[1,N]$ ASes cooperate in order to centralize their inter-domain routing: there exists a \textit{multi-domain SDN controller}, which is connected to the BGP routers of these $k$ ASes\footnote{This system abstraction can capture the main functionality of most of the previously proposed approaches.}. In the remainder, we refer to the set of the $k$ ASes, as the \textit{SDN cluster}.

\textbf{BGP Updates.} As in the Internet, ASes use BGP to exchange information and establish routing paths. 
When a BGP edge router of an AS receives a BGP update, it (i) calculates the updates (if any) for its BGP routing table, (ii) sends updates to the other BGP edge routers within the same AS (e.g., with iBGP), and (iii) sends updates to the BGP routers of the neighboring ASes. The time needed for this process may vary a lot among different connections since it depends on a number of factors, like the employed technology (hardware/software), routers' configuration (e.g., MRAI timers), intra-domain network, etc. 
To this end, in order to be able to analytically study the BGP updates dissemination (given the uncertainty and complexity), we model the time between the reception and forwarding of a BGP update in a probabilistic way. 

\begin{assumption}\label{assumption:exponential-lambda}
The time between the reception of a BGP update in an AS/router and its forwarding to a neighbor AS/router, is exponentially distributed with rate $\lambda$.
\end{assumption}
Despite the simplicity of the above assumption, our results can capture the behavior of real/emulated networks (see Section~\ref{sec:discussion}).

\textbf{Inter-domain SDN routing.} Each AS belonging to the SDN cluster informs the SDN controller upon the reception of a BGP update. The SDN controller, which is aware of the topology of the SDN cluster (neighbors, policies, paths, etc.), calculates the changes in the routing paths and installs the updated routes in each router/AS belonging to the SDN cluster. ASes react to updates from the SDN controller, as in regular BGP updates, and, thus, forward them to their (non SDN) neighbors.

Let $T_{sdn}$ be the time needed for an AS to inform the SDN controller and the controller to install the updated routes in every AS in the SDN cluster. This time can be expected to be in the order of few seconds~\cite{Kotronis-Routing-Centralization-ComNets-2015}, and much lower than the BGP updating process (cf., the default value for MRAI timers in Cisco routers is $30sec.$), thus, for simplicity, we assume here that $T_{sdn}=0$.


\section{Analysis: BGP Convergence Time}\label{sec:analysis}
Let us assume a routing change, e.g., an announcement of a new prefix by an AS, in the network at time $t_{0}=0$. Our goal is to calculate the \textit{BGP convergence time}, i.e., the time needed till all ASes/routers in the network have the final (i.e., shortest, conforming to policies) BGP routes for this prefix.

To this end, using Assumption~\ref{assumption:exponential-lambda}, we can model the dissemination of the BGP updates in the network with the Markov Chain (MC) of Fig.~\ref{fig:mc-nodes}, where each state corresponds to the number of ASes/routers that have the final BGP routes. At time $t_{0}=0$ the system is at state $0$, while the state $N$ denotes the BGP convergence. When an AS in the SDN cluster receives the BGP update, all the nodes in the SDN cluster are informed (through the controller); thus, we have a transition, e.g., from state $i$ to state $k+i$. The transition rates in the MC, as we discuss in detail later, depend on the network topology.

The Markov Chain of Fig.~\ref{fig:mc-nodes} is transient, and the BGP convergence time is the time needed to move from state $0$ to state $N$.

For notation brevity, in the remainder we use the MC of Fig.~\ref{fig:mc-steps}, which is equivalent to the MC of Fig.~\ref{fig:mc-nodes}. Here, the states represent the \textit{number of transitions} in the MC of Fig.~\ref{fig:mc-nodes}. For example, the state/step $1$ corresponds to the state $1$ or $k$ of the MC of Fig.~\ref{fig:mc-nodes}, while the state/step $i$ corresponds to the state $i$ or $k+i-1$ in the MC of Fig.~\ref{fig:mc-nodes}. The states $0$ are equivalent in both MCs, while the state/step $C$ denotes the BGP convergence, and, thus, corresponds to the state $N$ in the MC of Fig.~\ref{fig:mc-nodes}. 

If we denote with $x$ the step at which -for the first time- an AS in the SDN cluster receives the BGP update, then the transitions rates $\lambda_{i}^{'}$ in the MC of Fig.~\ref{fig:mc-steps} are given by
\begin{equation}
\lambda_{i}^{'} = \left\{
\begin{tabular}{ll}
$\lambda_{i,i+1}+\lambda_{i,i+k}$		&$, i\leq x$\\
$\lambda_{k+i-1, k+i}$					&$, i>x$
\end{tabular}
\right.
\end{equation}

\begin{figure}
\subfigure[Markov Chain (number of nodes)]{\includegraphics[width=\linewidth]{./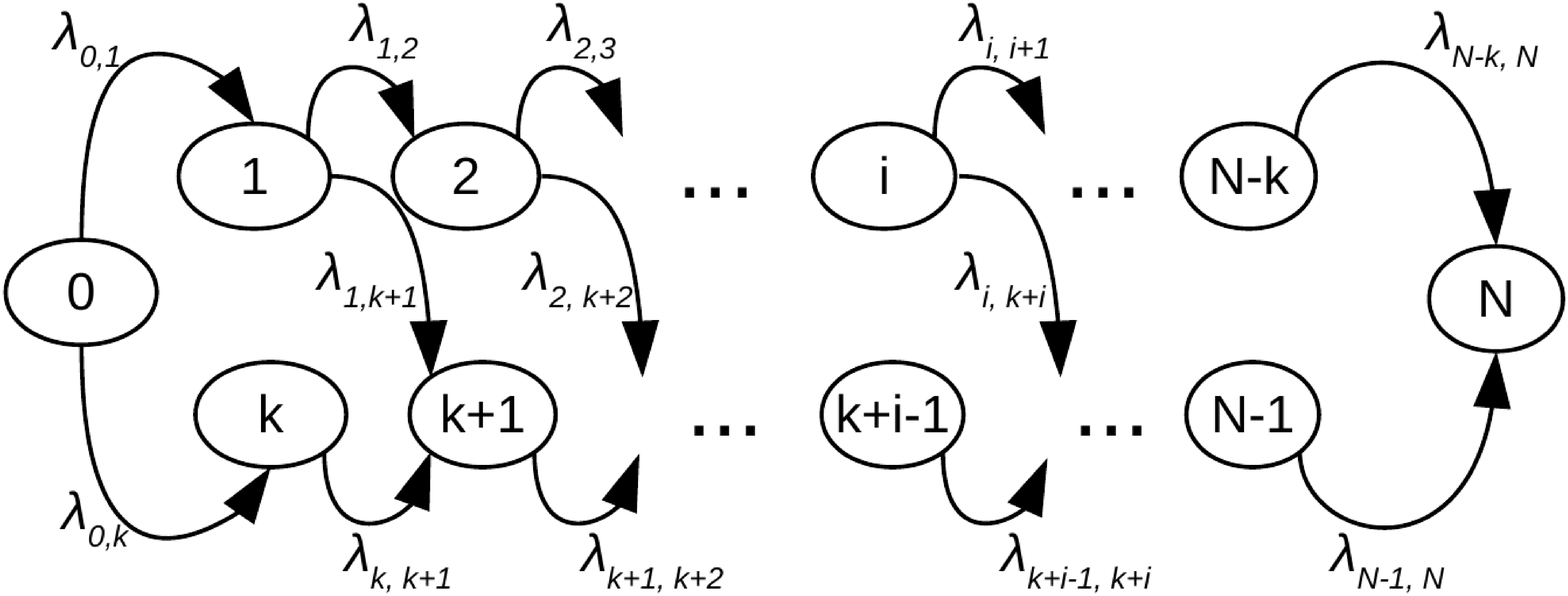}\label{fig:mc-nodes}}
\subfigure[Markov Chain (number of transitions, or \textit{steps})]{\includegraphics[width=\linewidth]{./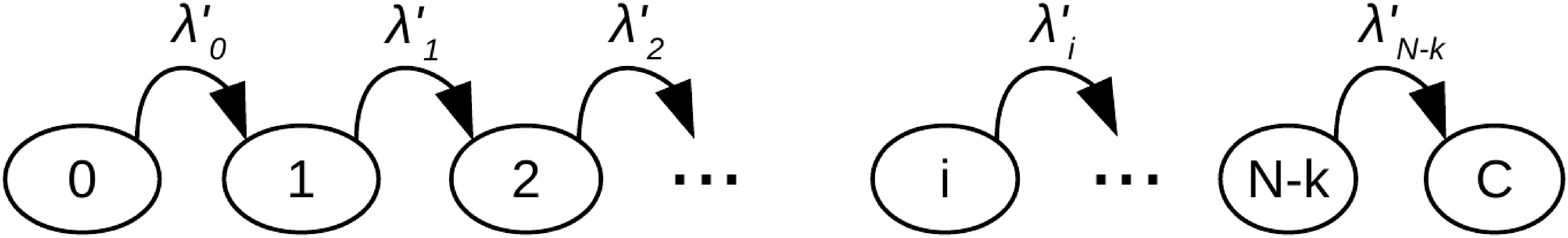}\label{fig:mc-steps}}
\caption{Markov Chains where the states correspond to (a) the number of nodes that have updated BGP routes, and (b) the number of transitions, or \textit{steps}, of the BGP update dissemination process.}
\label{fig:markov-chains}
\end{figure}

We now proceed to calculate the rates $\lambda_{i}^{'}$. 
The ASes that have received the BGP updates, will then send BGP updates to some of their neighboring ASes, according to their routing policies. We refer to a neighbor to which the update will be forwarded as a \textit{bgp-eligible} neighbor. 

\begin{definition}\label{def:bgp-degree}
We define as the \texttt{bgp-degree} at step $i$, $D(i)$, the number of the ASes that are bgp-eligible neighbors with \underline{any} of the ASes that have received the BGP updates at step $i$. 
\end{definition}

Although an AS might receive the same BGP update from more than one neighbors, the final BGP route will correspond to only one of the received updates (i.e., shortest path). Hence, to calculate the transition rate $\lambda_{i}^{'}$, we take into account only one BGP connection (corresponding to the shortest path) per bgp-eligible neighbor. Since the BGP update times are exponentially distributed with rate $\lambda$ (see, Assumption~\ref{assumption:exponential-lambda}), it follows that $\lambda_{i}^{'}$ will be given by\footnote{The transition time is the minimum of $D(i)$ i.i.d. exponentially distributed times.}
\begin{equation}\label{eq:lambda-prime}
\lambda_{i}^{'} = \lambda\cdot D(i)
\end{equation}

Knowing the rates $\lambda_{i}^{'}$, we can calculate the transition delays in each step. Adding the delays in each step, we can derive Theorem~\ref{thm:expected-delay-generic}, which gives the BGP convergence time, i.e., the time to move from state $0$ to state $C$.

\begin{theorem}\label{thm:expected-delay-generic}
The expectation of the BGP convergence time $T$ in a hybrid SDN/BGP inter-domain topology is given by
\begin{equation}\label{eq:theorem-expected-delay-generic}
E[T] = \frac{1}{\lambda}\cdot \sum_{x=0}^{N-k}\sum_{i=1}^{N-k}\frac{1}{\Dix}\cdot P_{sdn}(x)
\end{equation}
where $\Dix$ is the \bgp of the network at step $i$ given that the SDN cluster receives the update at step $x$, and $P_{sdn}(x)$ is the probability that the SDN cluster receives the update at step $x$.
\end{theorem}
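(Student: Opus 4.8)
The plan is to compute $E[T]$ by conditioning on $x$, the step at which the SDN cluster first receives the update, and then averaging over $x$ with weights $P_{sdn}(x)$ via the law of total expectation. Working in the step-space chain of Fig.~\ref{fig:mc-steps} is convenient, because there every transition advances the state by exactly one step, so the trajectory from state $0$ to convergence is a deterministic pure-birth path whose length I first pin down. At $t_0=0$ only the originating AS holds the new route; each ordinary transition then informs one further AS, whereas the single transition that reaches the SDN cluster informs all $k$ of its members at once. Hence reaching all $N$ ASes takes exactly $N-k$ transitions irrespective of $x$, so convergence occurs at step $C=N-k$ and the conditional convergence time given $x$ is the sum of the $N-k$ sojourn times at steps $i=1,\dots,N-k$.

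Next I would invoke the continuous-time structure established by Assumption~\ref{assumption:exponential-lambda} and Eq.~\eqref{eq:lambda-prime}: the sojourn time at step $i$ is exponentially distributed with rate $\lambda_i'=\lambda\cdot\Dix$, where $\Dix$ is the \bgp of the network at step $i$ conditioned on the cluster having been reached at step $x$. This is exactly where the dependence on whether $i\le x$ or $i>x$ enters, since the set of informed nodes — and therefore its bgp-eligible neighbourhood — jumps when the cluster is reached. By linearity of expectation the conditional expected convergence time is
\begin{equation*}
E[T\mid x]=\sum_{i=1}^{N-k}\frac{1}{\lambda_i'}=\frac{1}{\lambda}\sum_{i=1}^{N-k}\frac{1}{\Dix},
\end{equation*}
and unconditioning gives $E[T]=\sum_{x=0}^{N-k}E[T\mid x]\,P_{sdn}(x)$, which is precisely Eq.~\eqref{eq:theorem-expected-delay-generic}.

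The step needing the most care is the conditioning itself, because $x$ is not external data but is determined by the random evolution of the process, namely by which bgp-eligible neighbour is informed at each competitive step. I must therefore argue that conditioning on the event $\{\text{cluster first reached at step }x\}$ does not distort the sojourn-time distributions. The clean way is to appeal to the standard decomposition of a continuous-time Markov chain into its embedded jump chain and its holding times: given the full sequence of visited states, the holding times remain independent exponentials with the states' own rates, and the event fixing $x$ is measurable with respect to the jump chain alone. Once $x$ is fixed, the node-space state at every step is determined, hence so is each $\Dix$ and each rate $\lambda_i'$, and the sojourn times stay conditionally independent — which is what legitimises summing their means term by term as above.
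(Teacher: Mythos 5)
Your proposal is correct and takes essentially the same route as the paper's own proof: condition on the step $x$ at which the SDN cluster is reached, use linearity of expectation to sum the exponential sojourn times with means $1/(\lambda\, D(i|x))$ over the $N-k$ steps, and then uncondition with weights $P_{sdn}(x)$. Your extra jump-chain/holding-time argument (that conditioning on $x$, an event measurable with respect to the embedded chain, leaves the sojourn times independent exponentials at the full exit rates) is a rigor supplement the paper leaves implicit rather than a different method; the only slip is notational, since with sojourns at steps $i=1,\dots,N-k$ convergence is reached at step $C=N-k+1$, not $N-k$.
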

\begin{proof}
To calculate $E[T]$, we first apply the conditional expectation
\begin{equation}\label{eq:ET-sum-ETx}
E[T] = \sum_{x=0}^{N-k}E[T|x]\cdot P_{sdn}(x)
\end{equation}
where $E[T|x]$ denotes the expected convergence time, given that the SDN cluster receives the update at step $x$. Under this condition, the bgp-degrees at each step are $D(i|x)$, $i\in[1,N-k]$. Hence, taking also into account \eq{eq:lambda-prime}, it follows that the transition delay from a step $i$ to a step $i+1$, $T_{i,i+1}$, is exponentially distributed with rate $\lambda_{i}^{'} = \lambda\cdot D(i|x)$, and its expectation is given by
\begin{equation}
E[T_{i,i+1}|x] = \frac{1}{\lambda\cdot D(i|x)}
\end{equation}
\underline{Remark:} The state/step $0$ does not correspond to a real state of the system; it is only used for presentation purposes. Thus, we set $T_{0,1}=0$.

As mentioned earlier, the BGP convergence delay is the time needed to move from step $0$ to step $C$, and thus it is given by the sum of the transition delays of all the intermediate steps, i.e., 
\begin{align}\label{eq:ETx-sum-Dix}
E[T|x] 	&= E\left[\sum_{i=1}^{N-k}T_{i,i+1}|x\right] \nonumber\\
		&= \sum_{i=1}^{N-k}E[T_{i,i+1}|x] \nonumber\\
		&= \sum_{i=1}^{N-k}\frac{1}{\lambda\cdot D(i|x)}
\end{align}
Now, the expression of \eq{eq:theorem-expected-delay-generic} follows by substituting \eq{eq:ETx-sum-Dix} to \eq{eq:ET-sum-ETx}.
\end{proof}

In the following sections we calculate the quantities ${\Dix}$ and $P_{sdn}(x)$ for important network topologies.

\subsection{Full-Mesh Network Topology}
We first consider a basic topology: a full-mesh network, where every AS-pair is connected.

\begin{theorem}\label{thm:P-sdn}
The probability that the SDN cluster receives the update at step $x$ is given by
\begin{equation}\label{eq:P-sdn}
P_{sdn}(x) = \frac{k}{N-x}\cdot \prod_{j=0}^{x-1}\left(1-\frac{k}{N-j}\right)
\end{equation}
\end{theorem}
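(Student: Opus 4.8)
The plan is to exploit the symmetry of the full-mesh topology together with the memoryless property of the exponential update times (Assumption~\ref{assumption:exponential-lambda}) to show that the order in which ASes receive the update is a uniformly random permutation of the $N$ nodes. Under this view, $x$ --- the step at which the SDN cluster is first reached --- is simply the position of the first SDN node in that permutation, and $P_{sdn}(x)$ reduces to a standard ``first special element'' computation.

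First I would argue that, at any step before an SDN node has been reached, every not-yet-reached AS is equally likely to be the next one reached. In a full mesh every unreached AS is a bgp-eligible neighbor of every AS that already holds the update, so by Definition~\ref{def:bgp-degree} each unreached AS contributes exactly one BGP connection to the current step (one clock per bgp-eligible neighbor, as in \eq{eq:lambda-prime}). The next reached AS is the one whose exponential clock fires first; since all clocks are i.i.d.\ with rate $\lambda$, by symmetry each unreached AS is equally likely to win the race, independently of the elapsed time (memorylessness). Iterating this from the origin yields a uniformly random reaching order over all $N$ ASes.

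Next I would connect the step index to the permutation position. As long as no SDN AS has been reached, each transition adds exactly one (non-SDN) AS, so the step number equals the permutation position; the moment the first SDN AS is reached, the whole cluster converts in a single step (the $i\to i+k$ transition). Hence $x$ equals the position (counting the origin as position $0$) of the first of the $k$ SDN nodes in the uniform random permutation. I would then compute $P_{sdn}(x)$ by sequential conditioning: the first $x$ positions must be occupied by non-SDN ASes and position $x$ by an SDN AS, giving
\begin{equation*}
P_{sdn}(x) = \left(\prod_{j=0}^{x-1}\frac{N-k-j}{N-j}\right)\cdot\frac{k}{N-x},
\end{equation*}
and rewriting $\frac{N-k-j}{N-j}=1-\frac{k}{N-j}$ yields \eq{eq:P-sdn}. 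The boundary case $x=0$ (empty product) correctly returns $P_{sdn}(0)=k/N$, the probability that the origin itself lies in the SDN cluster.

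The main obstacle I anticipate is making the ``equally likely next'' claim airtight rather than merely intuitive: one must verify that counting one clock per bgp-eligible neighbor (not one per reached--unreached connection) is exactly what is needed so that the race among unreached nodes is symmetric, and that the SDN clustering does not bias this race before the cluster is hit --- i.e.\ that conditioning on ``no SDN node reached in the first $x$ steps'' leaves the conditional next-node distribution uniform over the remaining unreached nodes. Once this symmetry is established, the remaining algebra is routine.
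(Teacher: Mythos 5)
Your proposal is correct and is essentially the paper's own argument: the paper likewise uses the symmetry of the i.i.d.\ exponential clocks to assert that, given the cluster has not yet been hit, the next AS to receive the update is uniform over the unreached nodes (so $P_{sdn}(i\,|\,x>i-1)=\frac{k}{N-i}$), and then multiplies the chain of conditional probabilities $\frac{k}{N-x}\prod_{j=0}^{x-1}\bigl(1-\frac{k}{N-j}\bigr)$. Your uniform-random-permutation framing is just a repackaging of that same sequential conditioning, with the same key symmetry step and the same algebra.
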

\begin{proof}
The SDN cluster comprises $k$ (out of the total $N$) ASes. Since we consider that the prefix announcement is made by a (random) AS in the network, the probability that the announcing AS is in the SDN cluster (and thus $x=0$) is 
\begin{equation}
P_{sdn}(0) = \frac{k}{N}
\end{equation}
If the announcing AS is not in the SDN cluster, then $x>0$, and thus
\begin{equation}\label{eq:Psdn-x-larger-0}
P_{sdn}(x>0) = 1-P_{sdn}(0) = 1-\frac{k}{N}
\end{equation}
The probability $P_{sdn}(1)$ is given by
\begin{equation}
P_{sdn}(1) = P_{sdn}(1|x>0)\cdot P_{sdn}(x>0)
\end{equation}
where $P_{sdn}(1|x>0)$ denotes the probability that the SDN cluster receives the BGP update at step $1$, given that it has not received it before. If $x>0$, then at step $1$ the remaining ASes  without the update are $N-1$, of which $k$ belong to the SDN cluster. Since the BGP update processes are distributed with the same rate $\lambda$, the probability that the next AS to get the update belongs to the SDN cluster is $\frac{k}{N-1}$. Therefore, and taking into account \eq{eq:Psdn-x-larger-0}, it holds that 
\begin{equation}
P_{sdn}(1) = P_{sdn}(1|x>0)\cdot P_{sdn}(x>0) = \frac{k}{N-1}\cdot \left(1-\frac{k}{N}\right)
\end{equation}
and, respectively,
\begin{align}
P_{sdn}(x>1) &= \left(1-P_{sdn}(1|x>0)\right)\cdot P_{sdn}(x>0) \nonumber\\
			&= \left(1-\frac{k}{N-1}\right)\cdot \left(1-\frac{k}{N}\right)
\end{align}

Proceeding similarly for the next steps $i=2,...,N-k$, it can be shown that
\begin{equation}
P_{sdn}(i|x>i-1) = \frac{k}{N-i}
\end{equation}
and
\begin{align}
P_{sdn}(x>i-1) 	&= \left(1-P_{sdn}(i-1|x>i-2)\right)\cdot ... \cdot P_{sdn}(x>0) \nonumber\\
				& = \left(1-\frac{k}{N-(i-1)}\right)\cdot...\cdot  \left(1-\frac{k}{N}\right) \nonumber\\
				& = \prod_{j=0}^{i-1}\left(1-\frac{k}{N-j}\right)
\end{align}
and, therefore,
\begin{align}
P_{sdn}(i) &= P_{sdn}(i|x>i-1)\cdot P_{sdn}(x>i-1)\nonumber\\
				&= \frac{k}{N-i}\cdot \prod_{j=0}^{i-1}\left(1-\frac{k}{N-j}\right)
\end{align}
which is the expression of \eq{eq:P-sdn}.
\end{proof}

Theorem~\ref{thm:Dix-full-mesh} gives the bgp-degrees $\Dix$ in a mesh network as a function of $n(i|x)$, which is defined as the number of nodes with updated BGP information at step $i$, given that the SDN cluster received the update at step $x$
\begin{equation}
n(i|x) = \left\{
\begin{tabular}{ll}
$i$	& $, i\leq x$ \\
$i+k-1$	& $, i>x$
\end{tabular}
\right.
\end{equation}

\begin{theorem}\label{thm:Dix-full-mesh}
The \bgp $\Dix$, $i\in[1,N-k], x\in[0,N-k]$, in a full-mesh network topology is given by
\begin{equation}
\Dix = N-n(i|x)
\end{equation}
\end{theorem}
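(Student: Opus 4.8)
The plan is to exploit the defining property of a full mesh, namely that every AS is directly connected to every other AS, in order to show that the bgp-degree collapses into a simple count of the ASes that do not yet hold the update. By definition $n(i|x)$ is the number of ASes that already have the final BGP route at step $i$ given that the SDN cluster is reached at step $x$; hence the number of ASes still lacking the update is exactly $N-n(i|x)$. The whole theorem amounts to arguing that this quantity equals $\Dix$.

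First I would fix the set $S$ of ASes that hold the update at step $i$, so that $|S|=n(i|x)$, and take an arbitrary AS $u\notin S$. Because the topology is a full mesh, $u$ is a neighbor of every AS in $S$, and in particular of at least one AS that holds the update; under the natural full-mesh policy (where each shortest path is a direct one-hop connection and every neighbor relation is bgp-eligible), $u$ is therefore a bgp-eligible neighbor of some AS in $S$ and is counted in $\Dix$. This establishes that every one of the $N-n(i|x)$ non-updated ASes contributes to the bgp-degree. For the converse I would note that, by Definition~\ref{def:bgp-degree}, the bgp-degree tallies \emph{ASes} rather than connections, so a non-updated AS adjacent to many updated ASes is still counted once; this removes any multiplicity arising from full-mesh connectivity. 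An AS already in $S$ holds the final route and will not adopt a new one, so it is not a bgp-eligible recipient and is excluded from the count, consistent with the discussion preceding \eq{eq:lambda-prime} where exactly one connection is taken per bgp-eligible neighbor that actually changes its route. Combining the two directions yields $\Dix=N-n(i|x)$, uniformly for $i\le x$ and $i>x$, since the SDN-cluster activation only alters which ASes belong to $S$ (already absorbed into $n(i|x)$) and leaves the full-mesh adjacency untouched.

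The step I expect to require the most care is pinning down precisely which ASes Definition~\ref{def:bgp-degree} intends to count: the statement literally reads ``bgp-eligible neighbors of any updated AS,'' and the argument hinges on interpreting this as the set of \emph{distinct, not-yet-updated} neighbors. Justifying that already-updated ASes are excluded and that each eligible AS is counted only once is the crux; once that reading is fixed, the full-mesh property forces the update frontier to coincide with the entire non-updated set, and the formula $\Dix=N-n(i|x)$ follows immediately.
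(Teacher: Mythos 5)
Your proof is correct, but it reaches the formula by a genuinely different route than the paper. You argue directly from Definition~\ref{def:bgp-degree} via a two-sided counting argument over the whole updated set $S$: full-mesh connectivity makes every AS $u\notin S$ adjacent to some updated AS (hence counted in $\Dix$), while the definition tallies distinct ASes rather than connections (hence each such $u$ is counted once, and members of $S$ are excluded), giving $\Dix=N-n(i|x)$. The paper instead makes a reduction that sidesteps the multiplicity question entirely: in a full mesh the direct one-hop link to the announcing AS is every AS's shortest path, so only the BGP messages sent by the announcing AS need to be considered, and that AS has exactly $N-n(i|x)$ neighbors still lacking the update. Each approach buys something. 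Yours is more careful about the set-cardinality reading of Definition~\ref{def:bgp-degree} (distinct, not-yet-updated ASes), which you rightly identify as the crux, and it makes the exclusion of already-updated ASes explicit where the paper leaves it implicit. The paper's reduction is terser but does more work downstream: by identifying the counted ASes with pending messages from a \emph{single} source, it exhibits the $\Dix$ i.i.d.\ exponential clocks whose minimum justifies the transition rate $\lambda_{i}^{'}=\lambda\cdot \Dix$ in \eq{eq:lambda-prime} and hence Theorem~\ref{thm:expected-delay-generic}; your parenthetical appeal to the ``one connection per bgp-eligible neighbor'' convention preceding \eq{eq:lambda-prime} covers the same point, so nothing is missing for the statement as given.
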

\begin{proof}
In a mesh network, since every AS-pair is directly connected, only the BGP messages sent by the announcing AS (i.e., shortest path) need to be considered. In step $i$, the announcing AS has $N-n(i|x)$ neighbors that have not received the BGP updates, and thus, it follows that $D(i|x) = N-n(i|x)$.
\end{proof}

\subsection{Random Graph Network Topologies}

In networks that are not full-meshes, ASes can be connected in many different ways and policies. Since it is not possible to study every single topology, we use two classes of random graphs to capture the effects of routing centralization in non full-mesh networks. In this first approach, we consider unconstrained routing policies.

\subsubsection{Poisson (Erdos-Renyi) Graph}
We first consider the case of a Poisson random graph, where a link between each AS-pair exists with probability $p$. Varying the value of $p$ we can capture different levels of sparseness.

Using similar arguments as in the full-mesh case, it is easy to show that the probabilities $P_{sdn}(x)$ are given by Theorem~\ref{thm:P-sdn}. The \textit{expected} bgp-degrees, which can be used (as an approximation) instead of $\Dix$ in Theorem~\ref{thm:expected-delay-generic}, are given by the following Theorem. 
\begin{theorem}\label{thm:Dix-poisson}
The expectation of the \bgp $\Dix$, $i\in[1,N-k], x\in[0,N-k]$, in a Poisson graph network topology is
\begin{equation}
E[\Dix] = \left(N-n(i|x)\right) \cdot \left(1-(1-p)^{n(i|x)}\right)
\end{equation}
\end{theorem}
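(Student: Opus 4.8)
The plan is to compute the expectation by fixing the set of informed nodes and then averaging over the random edges of the Poisson graph through an indicator-variable argument. First I would fix the step $i$ and the conditioning event $x$, so that exactly $n(i|x)$ nodes carry the updated BGP information; writing $n=n(i|x)$ for brevity, let $S$ denote this set of informed nodes and $\bar{S}$ the set of the remaining $N-n$ uninformed nodes. Because this subsection assumes unconstrained routing policies, every neighbor is bgp-eligible, so by Definition~\ref{def:bgp-degree} the quantity $\Dix$ counts precisely the nodes of $\bar{S}$ that are adjacent (in the graph) to at least one node of $S$.

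Next I would introduce, for each node $v\in\bar{S}$, the indicator random variable $I_v$ that equals $1$ when $v$ has at least one edge to $S$ and $0$ otherwise, so that $\Dix=\sum_{v\in\bar{S}}I_v$. In the Poisson graph each of the $n$ potential edges between $v$ and the nodes of $S$ is present independently with probability $p$; hence the probability that $v$ is entirely disconnected from $S$ is $(1-p)^{n}$, and $E[I_v]=1-(1-p)^{n}$. Applying linearity of expectation over the $|\bar{S}|=N-n$ uninformed nodes then yields
\begin{equation*}
E[\Dix]=\sum_{v\in\bar{S}}E[I_v]=(N-n)\bigl(1-(1-p)^{n}\bigr),
\end{equation*}
which, after substituting $n=n(i|x)$, is exactly the claimed expression.

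The computation itself is routine; the only genuine subtlety, and what I would expect to be the main obstacle, is the conditioning. The set $S$ is in reality produced by the BGP dissemination process, which is driven by the very edges of the graph, so $S$ and the edges incident to $S$ are in principle correlated, and an exact treatment would have to account for this dependence. The argument above instead treats $n(i|x)$ as a fixed count and the edges from $S$ to $\bar{S}$ as fresh independent Bernoulli$(p)$ variables; this is why the surrounding text frames $E[\Dix]$ as an approximation to be substituted into Theorem~\ref{thm:expected-delay-generic} rather than an exact identity. The key is therefore to state this independence cleanly and to observe that the indicator computation is \emph{exact} once $S$ is regarded as an arbitrary fixed set of size $n(i|x)$; under that reading the obstacle disappears and the displayed identity holds without approximation.
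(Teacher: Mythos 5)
Your proof is correct and matches the paper's own argument: the paper likewise fixes the informed set $S_i$ of size $n(i|x)$, computes the per-node probability $1-(1-p)^{n(i|x)}$ that an uninformed node has at least one edge into $S_i$, and multiplies by the $N-n(i|x)$ uninformed nodes (the paper phrases the count as a binomial random variable, you use linearity of expectation over indicators --- an immaterial difference). Your closing remark on the conditioning subtlety (that $S$ is generated by the edge-driven dissemination, so treating the cross edges as fresh Bernoulli variables is the idealization under which the identity is exact) is a fair observation that the paper leaves implicit.
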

\begin{proof}
In a non full-mesh network, some ASes are not directly connected to the announcing AS. Thus, in the calculation of the $\Dix$ we need to consider the bgp-eligible neighbors of \textit{all} the ASes that have received the update.

Let assume that we are at step $i$, and $n(i)$ nodes have received the BGP updates; we denote the set of these nodes as $S_{i}$. A node $m\notin S_{i}$ is a bgp-eligible neighbor with a node $j\in S_{i}$ with probability
\begin{equation}
P(m,j) = p
\end{equation}
since every pair of nodes is connected with probability $p$ (by the definition of a Poisson graph). The probability that $m$ is \textit{not} a bgp-eligible neighbor with \textit{any} of the nodes $j\in S_{i}$, is given by
\begin{align}
1-P(m,S_{i}) = \prod_{j\in S_{i}}(1-P(m,j)) = \prod_{j\in S_{i}}(1-p) = (1-p)^{n(i)}
\end{align}
since $|S_{i}| = n(i)$. It follows easily that the complementary event, i.e., $m$ is a bgp-eligible neighbor with \textit{any} of the nodes $j\in S_{i}$, happens with probability
\begin{align}
P(m,S_{i}) = 1- (1-p)^{n(i)}
\end{align}

There are $N-n(i)$ ASes without the update, with each of them being a bgp-eligible neighbor with any of the nodes $j\in S_{i}$ with (equal) probability $P(m,S_{i})$. Hence, the total number of bgp-eligible neighbors (or, as defined in Def.~\ref{def:bgp-degree}, the \textit{bgp-degree} $D(i)$) is a binomially distributed random variable, whose expectation is given by 
\begin{equation}
E[D(i)] = (N-n(i))\cdot (1-(1-p)^{n(i)})
\end{equation}
\end{proof}

\begin{corollary}
Using the expectation of $D(i|x)$ in Theorem~\ref{thm:expected-delay-generic}, underestimates the BGP convergence time, i.e.,
\begin{equation}
E[T] \geq \frac{1}{\lambda}\cdot \sum_{x=0}^{N-k}\sum_{i=1}^{N-k}\frac{1}{E[\Dix]}\cdot P_{sdn}(x)
\end{equation}
\end{corollary}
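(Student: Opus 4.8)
The plan is to read the convergence-time formula of Theorem~\ref{thm:expected-delay-generic} with the understanding that, in a Poisson graph, the bgp-degree $\Dix$ is itself a random variable, and then to compare the resulting exact expression against the approximation that substitutes each $\Dix$ by its expectation $E[\Dix]$ from Theorem~\ref{thm:Dix-poisson}. The entire gap between the two is controlled by the convexity of the map $d\mapsto 1/d$, so the heart of the proof is a single invocation of Jensen's inequality.

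First I would observe that, unlike in the full-mesh case where $\Dix$ is a deterministic number (Theorem~\ref{thm:Dix-full-mesh}), here $\Dix$ is a binomially distributed random variable. Hence the exact expected convergence time must also average over the graph randomness, giving
\begin{equation}
E[T] = \frac{1}{\lambda}\cdot \sum_{x=0}^{N-k}\sum_{i=1}^{N-k}E\!\left[\frac{1}{\Dix}\right]\cdot P_{sdn}(x),
\end{equation}
that is, Theorem~\ref{thm:expected-delay-generic} with each term $1/\Dix$ read as its expectation $E[1/\Dix]$.

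Next I would apply Jensen's inequality term by term. Since $f(d)=1/d$ is convex on $(0,\infty)$, for every fixed pair $(i,x)$ it holds that $E[1/\Dix]\geq 1/E[\Dix]$. Because the coefficients $P_{sdn}(x)/\lambda$ are nonnegative, summing this inequality over all $i$ and $x$ preserves its direction and directly yields the stated bound.

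The step needing the most care is the positivity of $\Dix$: both Jensen's inequality in the form $E[1/\Dix]\geq 1/E[\Dix]$ and the finiteness of $E[T]$ presuppose $\Dix>0$ (almost surely), which can fail when the set of updated nodes has no bgp-eligible neighbor at some step, as in a disconnected graph. In the regime of interest (sufficiently connected graphs, with $p$ not too small) the probability of such a degenerate step is negligible and the bound holds as claimed; a fully rigorous treatment would condition on connectivity or explicitly bound the contribution of the degenerate event.
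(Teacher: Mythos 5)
Your proof is correct and follows essentially the same route as the paper: it writes each conditional transition delay exactly as $E\left[\frac{1}{\Dix}\right]$ and then applies Jensen's inequality for the convex map $d\mapsto 1/d$ term by term, summing with the nonnegative weights $P_{sdn}(x)/\lambda$. Your closing caveat about requiring $\Dix>0$ (almost surely) is a point the paper silently assumes rather than addresses, so it is a welcome refinement but not a divergence in method.
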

\begin{proof}
The bgp-degree $\Dix$ in non full-mesh networks, is a random variable that can take different values, depending on the BGP updates dissemination process (i.e., the exact set of nodes that have received the BGP updates at step $i$, and their links to the rest of the nodes). Thus, we can write for the transition delay 
\begin{equation}\label{eq:transition-delay-generic}
E[T_{i,i+1}|x] = \sum_{y} \frac{1}{y} \cdot P\{\Dix = y\} = E\left[\frac{1}{\Dix}\right]
\end{equation}
Calculating the exact value of the expectation $E\left[\frac{1}{\Dix}\right]$ is difficult, thus, we use a well known approximation in \eq{eq:transition-delay-generic}:
\begin{equation}
E[T_{i,i+1}|x] = E\left[\frac{1}{\Dix}\right] \approx \frac{1}{E[\Dix]}
\end{equation}
where the calculation of $E[\Dix]$ is much easier (see, e.g., proof of Theorem~\ref{thm:Dix-poisson}). Then the BGP convergence delay is approximately given by
\begin{equation}
E[T] \approx \frac{1}{\lambda}\cdot \sum_{x=0}^{N-k}\sum_{i=1}^{N-k}\frac{1}{E[\Dix]}\cdot P_{sdn}(x)
\end{equation}
However, applying Jensen's bound for the expectation of a convex function (here, $f(x) = \frac{1}{x}$) of a random variable (here, $\Dix$) on \eq{eq:transition-delay-generic}, gives
\begin{equation}
E[T_{i,i+1}|x] = E\left[\frac{1}{\Dix}\right] \geq \frac{1}{E[\Dix]}
\end{equation}
which proves the Corollary.
\end{proof}

\subsubsection{Arbitrary Degree Sequence Random Graph}
The structure of networks where the degrees (i.e., the number of connections) of the ASes are largely heterogeneous, e.g., power-law graphs, can be better described with a Configuration-Model Random Graph (CM-RG) rather than a Poisson graph. In the CM-RG model, a random graph is created by connecting randomly the nodes (i.e., ASes), whose degrees are given~\cite{Newman:Networks-book}. Hence, we can use the CM-RG to model a network with \textit{any arbitrary degree sequence} with mean value $\mu_{d}$ and variance $\sigma_{d}^{2}$ (and, $CV_{d} = \frac{\sigma_{d}}{\mu_{d}}$).

If the participation of an AS in the SDN cluster is independent of its degree, then the probabilities $P_{sdn}(x)$ are given by Theorem~\ref{thm:P-sdn}, and the bgp-degrees $\Dix$ are given by the following Result\footnote{We use the notation ``Result'', instead of ``Theorem'', because the provided expression is an approximation.}.
\begin{result}
The expectation of the \bgp $\Dix$, $i\in[1,N-k], x\in[0,N-k]$, in a CM-RG network topology is given by
\begin{equation}\label{eq:Dix-CM}
E[\Dix] = D(1|x)\cdot \prod_{j=1}^{i-1}A(j|x) + \sum_{j=1}^{i-1}\left(\mu_{d}(j|x)-1\right)\cdot \prod_{m=j+1}^{i-1}A(m|x)
\end{equation}
where
\begin{align}\label{eq:CM-D1x}
D(1|x)&= \left\{
\begin{tabular}{ll}
$\mu_{d}$	& $, x>0$ \\
$(N-k)\cdot \mu_{d} \cdot \ln\left(\frac{N}{N-k}\right)$	& $, x=0$
\end{tabular}
\right.\\
\mu_{d}(j|x) 	&= \mu_{d}\cdot \prod_{m=1}^{j-1}\left(1-\frac{CV_{d}^{2}}{N-n(m|x)-1}\right) \label{eq:average-out-degree}\\
A(j|x) 			&= 1-\frac{\mu_{d}(j|x)}{N-n(j|x)-1}
\end{align}
\end{result}
\begin{proof}
The main difference with the Poisson case is that in the CM-RG case, it is more probable that the ASes with the higher degrees will receive the BGP updates faster. For instance, let us assume that the announcing AS, e.g., AS-1, does not belong to the SDN cluster. If we denote with $d_{1},d_{2}$, and $d_{3}$ the degrees of AS-1, AS-2 and AS-3 (where AS-2 and AS-3, have not received yet the BGP update), a property of a CM-RG says that AS-1 is directly connected with AS-2 and AS-3 with probabilities
\begin{equation}
P(1,2) = c\cdot d_{1}\cdot d_{2}~~~~and~~~P(1,3) = c\cdot d_{1}\cdot d_{3}
\end{equation}
respectively, where $c$ a normalizing constant. In other words, the AS with with the higher degree has a higher probability to be directly connected to AS-1. Consequently, ASes with higher degrees have a higher probability to get the BGP update faster.

Now, let us first derive \eq{eq:CM-D1x}. If $x>0$, the announcing AS does not belong to the SDN cluster ($n(1|x>0)=1$), and thus the bgp-degree will be equal to the degree of the announcing AS. Since the average degree of a node is $\mu_{d}$, it follows easily that expectation of the bgp-degree in this case, is given by
\begin{equation}
E[D(1|x>0)] = \mu_{d}
\end{equation}

If $x=0$, the announcing AS belongs to the SDN cluster, and, thus, all the $k$ nodes in the SDN cluster have the BGP updates. In this case, the bgp-degree is the number of all bgp-eligible neighbors of these $k$ nodes. Let us denote with $S_{1}$ the set of nodes in the SDN cluster. Since the fact that a node belongs to the SDN cluster and its degree are independent, the probability that an edge coming out from a node $m\notin S_{1}$ is connected to a node $j\in S_{1}$, is equal to $\frac{k}{N}$. Hence, a node $m\notin S_{1}$, with degree $d_{m}$, is not connected to any of the $k$ nodes in the SDN cluster with probability
\begin{equation}
1-P(m,S_{1}) = \left(1-\frac{k}{N}\right)^{d_{m}}
\end{equation}
and, respectively
\begin{equation}
P(m,S_{1}) = 1-\left(1-\frac{k}{N}\right)^{d_{m}}
\end{equation}
The above equation holds $\forall m\notin S_{1}$; the degrees $d_{m}$ can have different values for each $m$. Since, there are $N-k$ nodes that do not belong to the SDN cluster (and, do not have the BGP updated route), the expected bgp-degree is given by
\begin{equation}
E[D(1|0)] = (N-k)\cdot E\left[1-\left(1-\frac{k}{N}\right)^{d}\right]
\end{equation}
where the expectation is taken over $d$, i.e., over all the degrees $d_{m}, m\notin S_{1}$. To calculate this expectation is difficult, thus we approximate it using a Taylor series approximation, i.e, 
\begin{align}
E\left[1-\left(1-\frac{k}{N}\right)^{d}\right] &= 1-E\left[\left(1-\frac{k}{N}\right)^{d}\right]  \nonumber\\
		&\approx 1-\left(1+E[d]\cdot \ln\left(1-\frac{k}{N}\right)\right) \nonumber\\
		&= - E[d]\cdot \ln\left(1-\frac{k}{N}\right)\nonumber\\
		&= E[d]\cdot \ln\left(\frac{N}{N-k}\right) \nonumber \\
		&= \mu_{d}\cdot \ln\left(\frac{N}{N-k}\right)
\end{align}
which completes the derivation of \eq{eq:CM-D1x}.

To compute the bgp-degrees $D(i|x)$ of the steps $i=2,...,N-k$, we follow a methodology similar to~\cite{pavlos-conf-model}. Let $D(i-1)$ be the bgp-degree at step $i-1$ and $\mu_{d}(i-1)$ the average degree of the nodes that have not received the BGP update by step $i-1$  (i.e., the nodes $m, m\notin S_{i-1}$). The average degree $\mu_{d}(i-1)$ is not equal to $\mu_{d}$, in general; this is due to the fact that nodes with higher degrees receive faster the BGP updates and thus the remaining nodes are nodes with lower degrees (for a more detailed argumentation see~\cite{pavlos-conf-model}). Following similar arguments as in~\cite{pavlos-conf-model}, it can be shown that the average degrees $\mu_{d}(i)$ are approximately given by \eq{eq:average-out-degree}.

We calculate the bgp-degree of the step $i$, based on the quantities $D(i-1)$ and $\mu_{d}(i-1)$, as follows
\begin{equation}\label{eq:Di_Di-1}
D(i) = D(i-1) - 1 + \mu_{d}(i-1)\cdot \left(1- \frac{D(i-1)}{N-n(i-1)}\right)
\end{equation}
The term $D(i-1) - 1$ is the bgp-degree of the previous step minus $1$, which denotes the $i^{th}$ node that received the BGP update (i.e., at the transition between step $i-1$ and $i$). To this quantity, we need to add the number of nodes that are bgp-eligible neighbors of the $i^{th}$ node, but are not bgp-eligible neighbors of any of the nodes $\in S_{i-1}$. The total nodes without the updated BGP routes at step $i-1$ are $N-n(i-1)$; $D(i-1)$ of these nodes are bgp-eligible neighbors of at least one node $\in S_{i-1}$. Hence, the probability that a bgp-eligible neighbor of the $i^{th}$ node is not a bgp-eligible neighbor of any of the nodes $\in S_{i-1}$, is given by $\left(1- \frac{D(i-1)}{N-n(i-1)}\right)$. Since, the $i^{th}$ node has (on average) $\mu_{d}(i-1)$ bgp-eligible neighbors, it follows that the number we need to add to the quantity $(D(i-1) - 1)$ is 
\begin{equation}
\mu_{d}(i-1) \cdot \left(1- \frac{D(i-1)}{N-n(i-1)}\right)
\end{equation} 

Finally, calculating recursively \eq{eq:Di_Di-1}, after some algebraic manipulations, we derive \eq{eq:Dix-CM}. 

\end{proof}
\section{Validation and Discussion}\label{sec:discussion}
We built a simulator, conforming to our model (i.e., Assumption~\ref{assumption:exponential-lambda}, $T_{sdn}=0$, etc.). In Fig.~\ref{fig:TvsPenetration} we compare our theoretical results against simulations (averages over $200$ runs). The accuracy is high for the Full-mesh and Poisson graph cases (Fig.~\ref{fig:TvsPenetration-fm-p}). In the CM-RG case (Fig.~\ref{fig:TvsPenetration-cm}) the two curves are similar, and the error of our expression (which is an approximation) is always less than $18\%$. 

Moreover, an initial comparison with the emulation results of~\cite{Kotronis-Routing-Centralization-ComNets-2015}, where a \textit{real BGP software router} is employed, shows that our theory (despite the assumptions and model simplicity) is in agreement with their observations: e.g., the convergence time (i) decreases faster after $\frac{k}{N}>50\%$, (ii) has small differences among different topologies when centralization exists, etc. 

This highlights one of the contributions of our work: with our results, we can quickly evaluate the effects of routing centralization. The provided expressions are simple (need only to know a few parameters: $N$ and $k$, and -if needed- $p$, or $\mu_{d}$ and $CV_{d}$) and easy/fast to compute, whereas emulations are time/resource demanding and have limited scalability. Hence, one could use our work to obtain initial insights, and then, e.g., proceed to a few targeted emulations for a more fine-grained or system-specific investigation.

A second contribution is that our methodology and results can be used as the building blocks to model and analyse more complex settings; i.e., model different parts of a large network as full-mesh/Poisson/CM-RG sub-networks, use our expressions for each sub-network, and synthesize them to compute the overall network performance. We consider such an example in Section~\ref{sec:case-study}, where we model the core of the Internet, considering different classes of tier-1 and tier-2 ISPs, while taking into account their routing policies as well.

Finally, we believe that this first analytic study, offers a new useful approach for investigating inter-domain SDN, and can be the basis for analysing further aspects of this field. In particular, we plan to extend our methodology, and consider more settings (e.g., BGP update times), performance metrics, and applications.

\begin{figure}
\subfigure[]{\includegraphics[width=0.49\linewidth]{./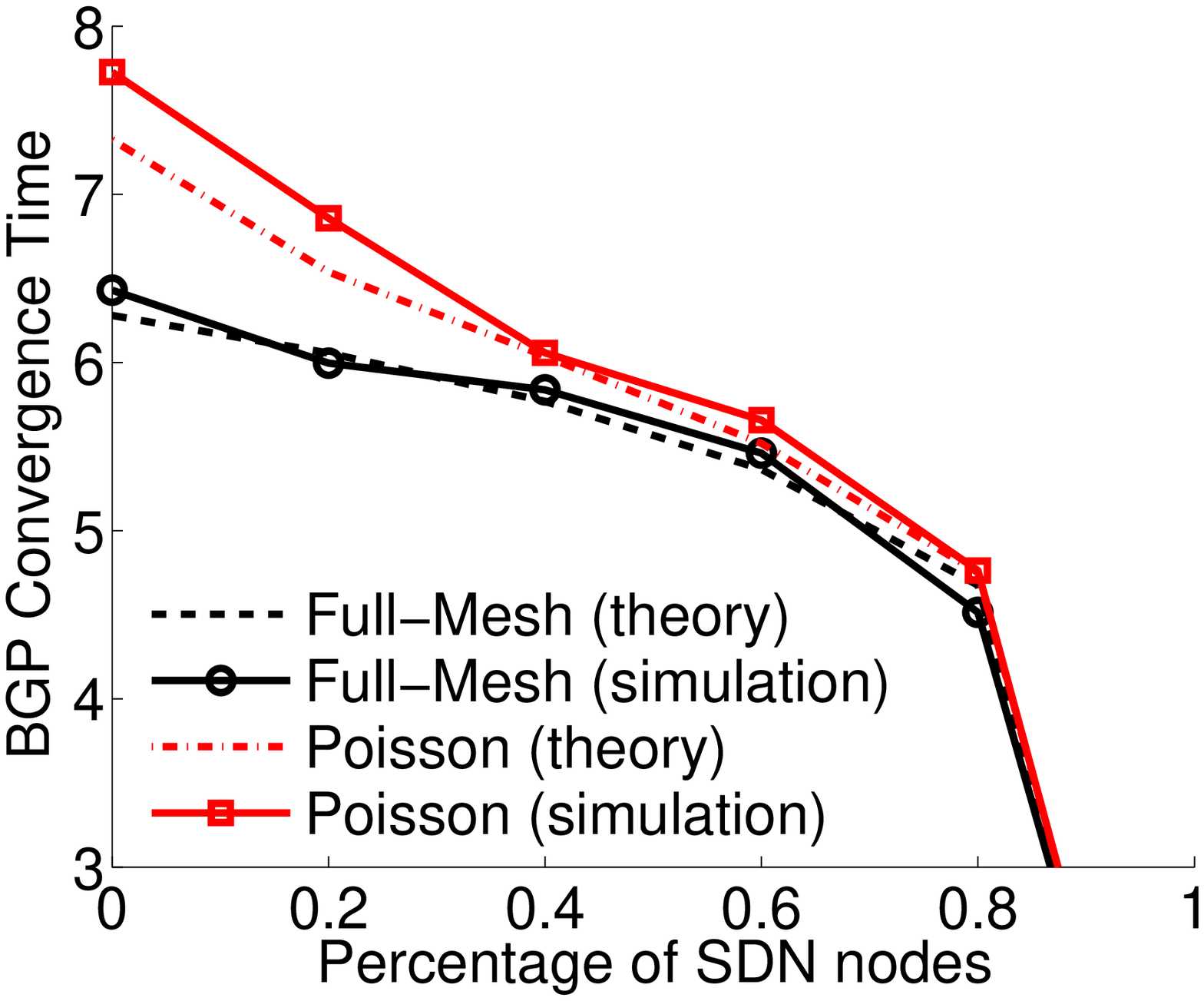}\label{fig:TvsPenetration-fm-p}}
\subfigure[]{\includegraphics[width=0.49\linewidth]{./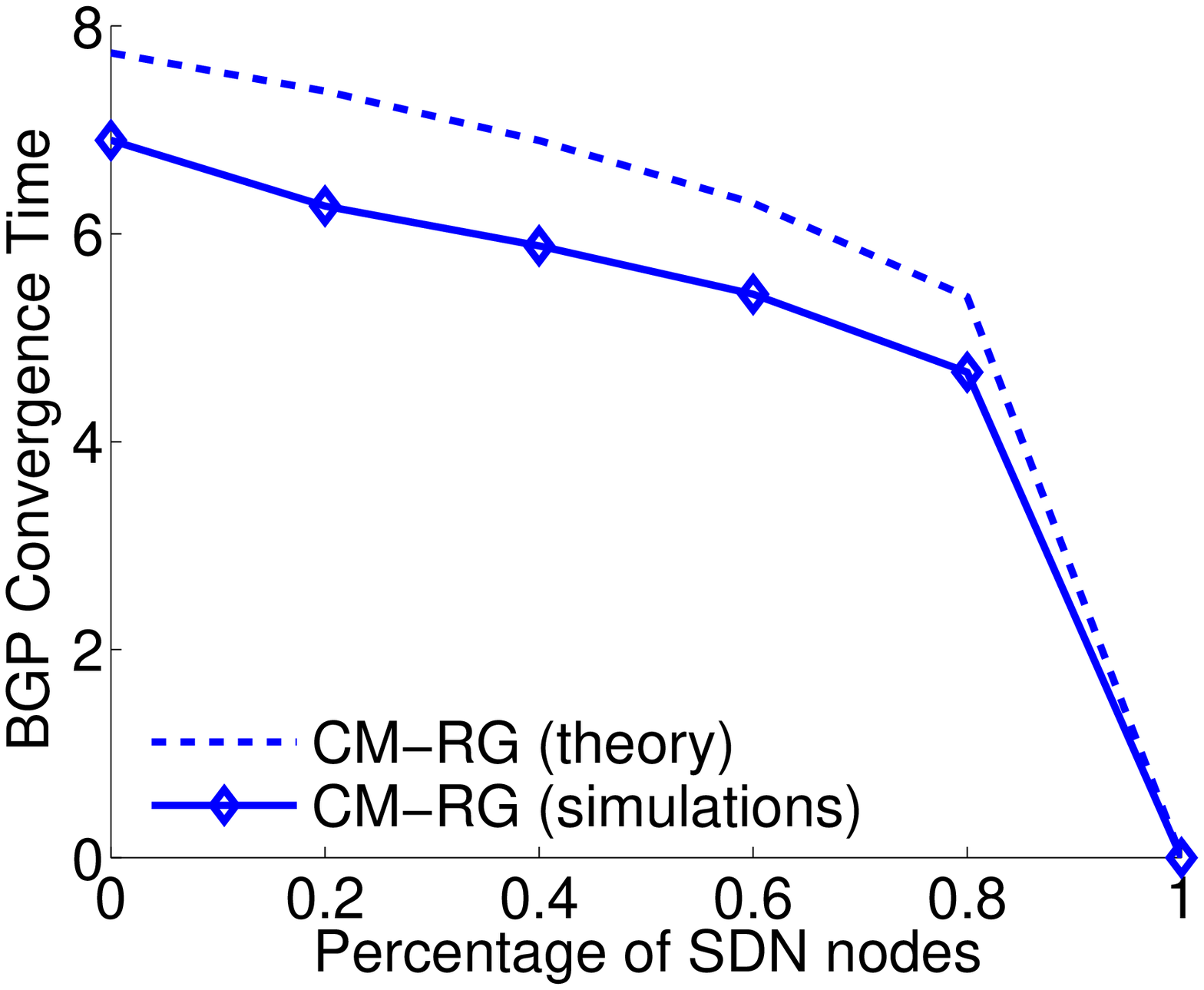}\label{fig:TvsPenetration-cm}}
\caption{BGP convergence time vs. SDN penetration (i.e, $\frac{k}{N}$) in networks with $N=300$. (a) Full-mesh and Poisson graph ($p = \frac{1}{60}$) topologies; and (b) CM-RG topology with degree sequence $d_{i}$, $d_{i}\in[5,200]$ and power-law distributed with exponent $2$.}
\label{fig:TvsPenetration}
\end{figure}
\section{Case Study: BGP Convergence at the Internet Core Network}\label{sec:case-study}
In this section, we use our results to study the effects of routing centralization in the core of the Internet. We show that our analysis can provide useful insights related to the evolution of the Internet, and answer questions like ``\textit{Internet flattening} or \textit{routing centralization} can improve more the BGP convergence?''. 

We first need to model the core of the Internet. We consider the following setting, which captures the structure in the core of the Internet in an (a) abstract (and thus analytically tractable), (b) generic (and thus easily adapted to network characteristics and changes), and (c) realistic (since it takes into account routing policies, e.g., customer to provider, peering links, valley-free model) way.

Specifically, we assume that the Internet core network is composed of $N_{1}$ tier-1 ISPs, and $N_{2}$ tier-2 ISPs. Let tier-1 ISPs to peer with each other with probability $p_{11}$ (Poisson graph), while tier-2 ISP peer with probability $p_{22}$. Also, tier-1 ISPs offer transit to tier-2 ISPs; each tier-2 ISP might have one or more transit providers. Let $p_{12}$ be the probability that a tier-1 and a tier-2 ISPs are connected.

Consider a prefix announced by AS-x, a tier-2 ISP. Due to the routing policies (peering/transit), the BGP updates are forwarded as follows. The tier-2 \textit{peers} of AS-x receive the BGP updates directly from AS-x. The same holds also for the tier-1 ISPs that are connected to AS-x. The remaining tier-1 ISPs receive the updates from other tier-1 ISPs, and the remaining tier-2 ISPs from their tier-1 providers. Fig.~\ref{fig:case-study} demonstrates such an example topology and the respective eligible paths of BGP updates.

Two recent trends, related to the evolution of the Internet, have been recently observed and/or proposed: (a) Internet flattening~\cite{Gregori-Impact-IXPs-ComCom-2011}, i.e., more peer connections appear. In the above model, higher values of $p_{22}$ correspond to more flattening. (b) Inter-domain SDN, where some ISPs centralize their routing. Network economics and feasibility studies indicate that routing centralization is more probable to start from larger ISPs~\cite{Kotronis-CXP-SOSR-2016}; hence, we assume that $k_{1}\in[1,N_{1}]$ tier-1 ISPs belong to an SDN cluster.

Using the results of the previous sections, and making similar analytic arguments, in the remainder, we derive expressions for the BGP convergence time as a function of the network parameters ($N_{1},k_{1},N_{2},p_{11},p_{22},p_{12}$). Therefore, it becomes straightforward to compare the two approaches, Internet flattening vs. routing centralization, by evaluating the expressions for different values of $p_{22}$ and $k_{1}$.

\begin{figure}
\includegraphics[width=\linewidth]{./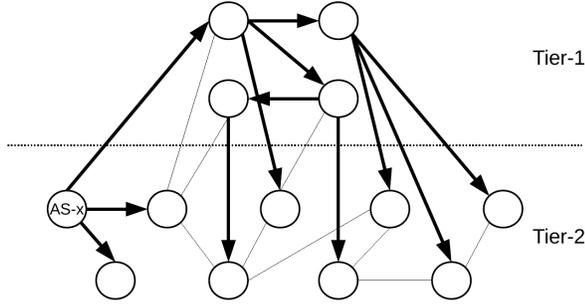}
\caption{Internet core topology example. Lines/arrows denote connections between ASes. Arrows denote the eligible paths of BGP updates.}
\label{fig:case-study}
\end{figure}

Let us use the notation 
\begin{equation}
G_{fm}(N,k,\lambda)~~~\text{and}~~~G_{p}(N,k,p,\lambda)
\end{equation}
for full-mesh and Poisson graph networks (with the given parameters), respectively.

The BGP convergence delay is given by the following Result.

\begin{result}
The BGP convergence delay in the core of the Internet is given by
\begin{equation}
E[T] = max\left\{E[T_{peering}], E[T_{transit}]\right\}
\end{equation}
where
\begin{align}
E[T_{peering}] 	&= E[T|G_{fm}\left(N_{2}\cdot p_{22},1, \lambda\right)] \\
E[T_{transit}]  &= E[T_{x,tier-1}] + E[T_{tier-1}] + E[T_{tier-1,tier-2}]
\end{align}
and
\begin{align}
E[T_{x,tier-1}] &	= \frac{1}{p_{12}\cdot N_{1}} \\
E[T_{tier-1}]	&\approx E[T|G_{p}\left(N_{1},k_{1},p_{11},\lambda\right)] \\
E[T_{tier-1,tier-2}] &\approx  E[T|G_{fm}\left(N_{2}\cdot (1-p_{22}),1, N_{1}\cdot p_{12}\cdot \lambda\right)]
\end{align}
\end{result}
\begin{proof}
$E[T_{peering}]$ is the time till every peer of AS-x (i.e., direct connections to tier-2 ISPs) to receive the updates, and $E[T_{transit}]$ the respective time for the non-peering ASes (which get the updates through transit, i.e., the tier-1 ISPs).

AS-x has (on average) $N_{2}\cdot p_{22}$ peers, which are directly connected to it. Hence, for the BGP update dissemination process we can use the full-mesh network model with $k=1$ (i.e., without centralization), since we consider centralization only for tier-1 ISPs. As a result, it holds that
\begin{equation}
E[T_{peering}] 	= E[T|G_{fm}\left(N_{2}\cdot p_{22},1, \lambda\right)]
\end{equation}

To calculate $E[T_{transit}]$, we split it into three parts, i.e., 
\begin{equation}
E[T_{transit}]  	= E[T_{x,tier-1}] + E[T_{tier-1}] + E[T_{tier-1,tier-2}]
\end{equation}
where (i) $T_{x,tier-1}$ is the time till the first tier-1 AS (a transit provider of AS-x) receives the BGP update, (ii) $T_{tier-1}$ is the time, after $T_{x,tier-1}$, needed for every tier-1 ISP to get the BGP update, and (iii) $T_{tier-1,tier-2}$ is the time, after $T_{tier-1}$, needed for the tier-1 ASes to send the BGP updates to their customers (i.e., tier-2 ASes).

Using the Markovian properties of BGP update times and making similar arguments as before, we can show that
\begin{equation}
E[T_{x,tier-1}] 	= \frac{1}{p_{12}\cdot N_{1}}
\end{equation}
and
\begin{equation}
E[T_{tier-1}]	\approx E[T|G_{p}\left(N_{1},k_{1},p_{11},\lambda\right)]
\end{equation}
Also, it holds that
\begin{equation}
E[T_{tier-1,tier-2}] \approx  E[T|G_{fm}\left(N_{2}\cdot (1-p_{22}),1, N_{1}\cdot p_{12}\cdot \lambda\right)]
\end{equation}
since there exist $N_{2}\cdot (1-p_{22})$ tier-2 ASes that are not peers with AS-x (and thus will receive the update from tier-1 ASes), and each of them is (on average) connected to $N_{1}\cdot p_{12}$ tier-1 ISPs (and can receive updates from any of them).
\end{proof}

\bibliographystyle{ieeetr}

\end{document}